\newtheorem{theorem}{Theorem}
\begin{document}
%
\title{Constrained Receiver Scheduling in Flexible Time and Wavelength Division Multiplexed Optical Access Networks}
%
%
%

\author{Chayan~Bhar, Arnab Mitra, Goutam~Das
\thanks{C. Bhar is a PhD scholar in the G.S.Sanyal School of Telecommunications, Indian Institute of Technology Kharagpur, India. e-mail: (chayanbhar88@live.com).}
\thanks{Goutam Das is Asst. Professor in the G.S.Sanyal School of Telecommunications, Indian Institute of Technology Kharagpur, India. e-mail: (gdas.gssst.iitkgp.ernet.in).}}%

%
%

\markboth{Journal of \LaTeX\ Class Files,~Vol.~14, No.~8, August~2015}%
{Shell \MakeLowercase{\textit{et al.}}: Bare Demo of IEEEtran.cls for IEEE Communications Society Journals}
%



\maketitle

\begin{abstract}
An increasing bandwidth demand has mandated a shift to the time and wavelength division multiplexing (TWDM) techniques in optical access networks (OAN). Typical TWDM scheduling schemes consider scheduling of the optical line terminal receiver only. In this paper we have identified an additional collision domain that is present in TWDM schemes that offer security, in addition to bandwidth flexibility. Scheduling of the identified collision domain is termed as group scheduling. We illustrate that consideration of receiver scheduling only (as done in typical TWDM schemes) severely affects their throughput when implemented on flexible and secure TWDM architectures. A novel media access control protocol has been proposed in this paper that considers the multiple collision domains. Through simulations, we are able to illustrate that the proposed scheme achieves a high throughput.  A theoretical upper bound of throughput has also been derived to explain the simulation results. Complexity reduction of the proposed scheme has been illustrated, thereby making it an attractive proposal.
\end{abstract}

\begin{IEEEkeywords}
TWDM scheduling, Optical access networks.
\end{IEEEkeywords}

%
\IEEEpeerreviewmaketitle

\section{Introduction} \label{Section:Intro}

\IEEEPARstart{T}{he} introduction of bandwidth intensive and quality of service (QoS) aware internet applications has resulted in a high per-user bandwidth demand. Optical access networks (OANs) were proposed to facilitate high end-user bandwidth and ensure an excellent QoS. OANs consist of an optical line terminal (OLT) at the central office (Fig. \ref{Fig:TWDM architectures}). The end-user units in an OAN are the optical network units (ONUs). The OLT performs bandwidth allocation among the ONUs through multiple stages of on-field remote nodes (first stage - $RN_1$ and second stage - $RN_{2,x}$). Bandwidth allocation is performed using statistical multiplexing methods. The hybrid time and wavelength division multiplexed (TWDM) scheme has been approved by the full service access network group as the next generation OAN technology \cite{Kani2009}. TWDM OANs enable the OLT to allocate bandwidth to ONUs on multiple wavelengths. This allows such schemes to support a high end-user bandwidth. Moreover, TWDM schemes can be designed so as to meet certain desirable criteria that are discussed below.

\begin{itemize}
	\item Bandwidth flexibility: Allows routing of available  bandwidth to anywhere within the network.
	\item Security and privacy: Prevents crosstalk attacks and unintended reception of data by malicious users in upstream and downstream respectively.
	\item Passivity: Nullifies the need of active on-field routing elements (that require power provisioning and significant operational expenditures).
	\item Excellent reach: Nullifies the need for on-field amplifiers thereby reducing capital expenditures and the need for power provisioning.
\end{itemize}

The TWDM schemes proposed in literature have been designed to meet the above objectives. Below we categorise the existing TWDM schemes and thereafter provide a qualitative comparison of these schemes. The classification has been performed (Fig. \ref{Fig:TWDM architectures}) depending on the OLT and distribution network designs. 

\begin{figure*}
	\centering
	\includegraphics[clip, page=4, trim=0cm 0cm 0cm 0.5cm, width=1\textwidth]{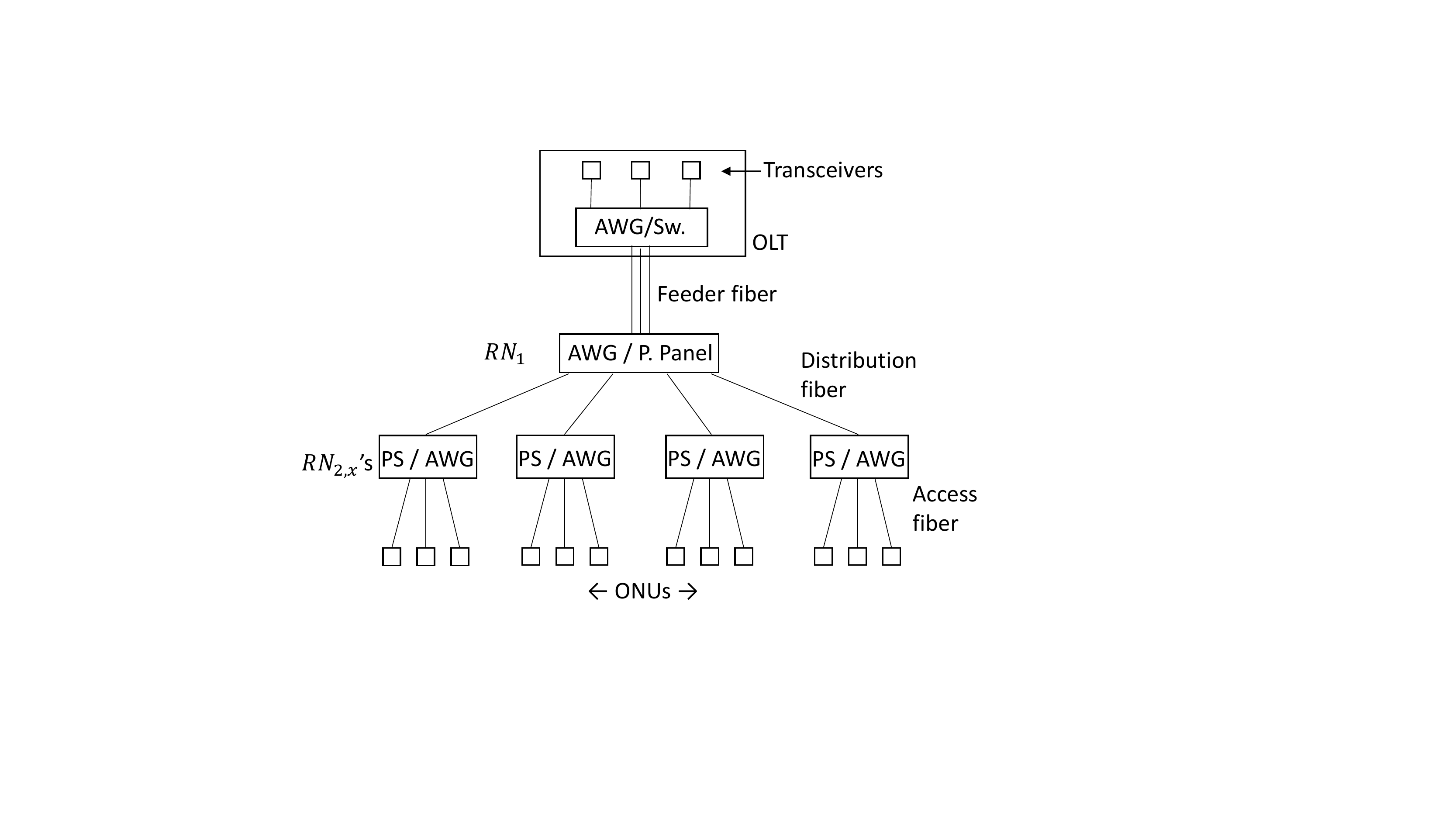}
	\caption{Architectures for different types of TWDM networks proposed in the literature \cite{Luo2013}, \cite{Dixit2011}, \cite{Bock2005}, \cite{Tsalamanis2004}, \cite{Bhar2014}, $G_1-ONU_{1,1}, ONU_{2,2}$, $G_2-ONU_{1,2}, ONU_{2,1}$.}
	\label{Fig:TWDM architectures}
\end{figure*}

\begin{enumerate}
	\item OLT has a fixed transceiver bank connected to an arrayed waveguide grating (AWG) and requires a single feeder fiber (Fig. \ref{Fig:TWDM architectures}(a)) \cite{Luo2013}. The distribution network has power splitter (PS) at $RN_1$ followed by another PS at the second stage remote node ($RN_{2,x}$). Schemes of this type ensure bandwidth flexibility but lack security, privacy and sufficient reach.
	
	\item OLT has a fixed transceiver bank connected to multiplexer and requires a single feeder fiber (Fig. \ref{Fig:TWDM architectures}(b)) \cite{Dixit2011}. The distribution network has an arrayed waveguide grating (AWG) device at $RN_1$, which is followed by PS at $RN_{2,x}$. Schemes of this type have limited bandwidth flexibility, security, privacy and reach.
	
	\item OLT has fixed a transceiver bank connected to a switch and requires multiple feeder fibers (Fig. \ref{Fig:TWDM architectures}(c)) \cite{Bock2005}. The distribution network has an AWG at $RN_1$ followed by AWGs at $RN_{2,x}$. Schemes of this type have limited flexibility but excellent security, privacy and reach.
	
	\item OLT has a tunable laser bank and requires multiple feeder fibers (Fig. \ref{Fig:TWDM architectures}(d)) \cite{Tsalamanis2004}. The distribution network is similar to that of Fig. \ref{Fig:TWDM architectures}(c) and has similar features.
	
	\item OLT has a tunable laser bank followed by two switches, each for upstream and downstream (Fig. \ref{Fig:TWDM architectures}(e)) \cite{Bhar2014}. Multiple feeder fibers are required and the distribution network has an AWG at $RN_1$ followed by another AWG at $RN_{2,x}$. Schemes of this type have excellent flexibility, security, privacy and reach.
\end{enumerate}

The first scheme allows ONUs to have tunable transceivers while the other schemes mandate the ONUs to have fixed wavelength transceivers. The schemes illustrated in Fig. \ref{Fig:TWDM architectures} are associated with collision domains at different points ($P_1,~P_2$) in the network. Therefore each of the schemes has a different scheduling requirement which is essential to prevent collision at these points. Moreover, since downstream is broadcast to the ONUs, scheduling is required to be performed for upstream only. We illustrate the implication of imperfect scheduling, using Fig. \ref{Fig:Collision}. For the scenario of Fig. \ref{Fig:Collision}, we assume that two receivers are present at the OLT. Moreover two $RN_{2,x}$'s and two ONUs per $RN_{2,x}$ are present in the distribution network ($RN_{2,1}:~ONU_{1,1},~ONU_{1,2}$ and $RN_{2,2}:~ONU_{2,1},~ONU_{2,2}$).

\subsection{Scheduling requirements in different TWDM schemes}

In schemes of the first type (Fig. \ref{Fig:TWDM architectures}(a)), upstream collision can occur at $C_R(DF)$ and (or) $C_R(FF)$. This happens if two ONUs, upstream on the same wavelength ($\lambda_1$) at overlapping time intervals. The ONUs can be from the same $RN_{2,x}$ (e.g., $ONU_{1,1}$ and $ONU_{1,2}$) or different $RN_{2,x}$'s (e.g., $ONU_{1,1}$ and $ONU_{2,1}$). In the first case collision occurs at both $C_R^1$ and $C_R^2$ while in the second case, collision is observed at $C_R^2$ only. Data collision is depicted by $C_R$ in Fig. \ref{Fig:Collision} using the upstream data of $ONU_{1,1}$ and $ONU_{2,1}$ ($U_{1,1}$ and $U_{2,1}$ respectively). We term this type of collision as \textit{Receiver collision}. Any collision at $C_R^1$ or $C_R^2$ is reflected at the OLT receivers. If ONUs are properly scheduled to the OLT receivers, such that multiple ONUs never upstream on the same wavelength at overlapping time intervals, then collision can be avoided at $C_R^1$ and $C_R^2$. 

Schemes of the second type (Fig. \ref{Fig:TWDM architectures}(b)) have a collision domain at $RN_{2,x}$ ($C_R$). Such architectures have the limitation that ONUs connected to a particular $RN_{2,x}$ communicate with the OLT on the same wavelength. Therefore, if two ONUs connected to a particular $RN_{2,x}$, e.g., $ONU_{1,1}$ and $ONU_{1,2}$ upstream simultaneously (or at overlapping time intervals) data corruption is inevitable (\textit{Receiver collision} similar to Fig. \ref{Fig:TWDM architectures}(a)). This can be prevented by proper scheduling of ONUs connecting to a particular $RN_{2,x}$.

In TWDM architectures of the third type (Fig. \ref{Fig:TWDM architectures}(c)) collision $C_R$ occurs at $R_1$, if ONUs upstream to the same OLT receiver (on different wavelengths) at overlapping time instants (\textit{Receiver collision}). (Receiver collision is also possible at $R_2$) This is avoided by proper scheduling of all ONUs. 

TWDM schemes of the fourth and fifth types (Fig. \ref{Fig:TWDM architectures}(d-e)) have two collision domains $C_R$ and $C_G$, at the switch and OLT receivers respectively. It is possible that multiple ONUs desiring to reach different OLT receivers (e.g., $ONU_{1,1}$ and $ONU_{2,2}$ desiring to reach $R_1$ and $R_2$ respectively), upstream simultaneously to a particular switch port ($C_G$). However, the switch can only perform one-to-one routing resulting in collision ($C_G$) at that port (Fig. \ref{Fig:Collision}). In such scenarios, although ONUs are scheduled to different receivers (i.e., $R_1$ and $R_2$), yet their upstream data ($U_{1,1}$ and $U_{2,2}$) are lost due to collision at the switch ($C_G$ in Fig. \ref{Fig:Collision}). This is termed as \textit{Group collision}. The ONUs that map to a particular switch port are assumed to form a group - $G_x$ ($ONU_{1,1}$ and $ONU_{2,2}$ belong to $G_1$ and require intra-group scheduling). 

However, even if intra-group scheduling is done perfectly, upstream from ONUs of different groups might get mapped to the same receiver at overlapping time instants (e.g., $ONU_{1,1}$ and $ONU_{2,1}$ get mapped to the same receiver $R_2$ in Fig. \ref{Fig:TWDM architectures}(d-e)) resulting in collision at $C_R$. As with the previous TWDM schemes, this is termed as \textit{Receiver collision} ($C_R$ in Fig. \ref{Fig:Collision}). Therefore, upstream data from $ONU_{1,1}$ and $ONU_{2,1}$ belonging to two separate groups $G_1$ and $G_2$ are lost due to (\textit{Receiver collision}). In order to avoid such an occurrence, proper scheduling of the ONUs to the OLT receivers is necessary. Therefore it is necessary to consider both group and receiver scheduling while designing the media access control protocol in TWDM architectures of the fourth and fifth types.

\textbf{The AWG at $RN_1$ of Fig. \ref{Fig:TWDM architectures}(d-e) can be replaced with a patch panel, as illustrated in \cite{Bhar2015}. This mitigates the physical layer problems associated with a cascaded AWG configuration.}

The TWDM scheduling algorithms proposed in literature (e.g., EFT, LFT, EFT-VF, LFT-VF - \cite{Colle2013}, \cite{Kanonakis2010}) have performed upstream scheduling of ONUs at non-overlapping time intervals (receiver scheduling). This prevents receiver collisions for the schemes of Fig. \ref{Fig:TWDM architectures}(a-c). However, for schemes illustrated in Fig. \ref{Fig:TWDM architectures}(d-e), the scheduling protocol should additionally prevent group collisions. Therefore, such schemes require a scheduling protocol that addresses both group and receiver scheduling. Absence of group scheduling in existing TWDM protocols results in throughput reduction due to group collisions. This has been illustrated in Section \ref{Section:performance} by implementing an existing TWDM scheduling scheme (EFT-VF) on a flexible TWDM network \cite{Bhar2015}, \cite{Tsalamanis2004}. In this paper we propose a protocol that considers the problem of simultaneously addressing group and receiver scheduling. We also reduce the complexity of the proposed scheme and prove that the modified scheme has linear computational complexity. A theoretical modelling of the limited bandwidth granting scheme has been developed to justify the throughput plots obtained for the proposed protocol. The rest of the paper is organised as follows; the proposed protocol is discussed in Section \ref{Section:proposedprotocol} followed by a discussion on reducing the complexity of the proposed scheme. This is followed by an analysis of the complexity of the modified scheme in Section \ref{Section:Linearise}. In section \ref{Section:performance} a comparison of performance results for existing schemes has been performed with the proposed protocol. Section \ref{Section:conclusion} concludes the paper.

\begin{figure}
	\centering
		\includegraphics[clip, page=5, trim=7cm 4cm 8cm 2cm, width=0.65\textwidth]{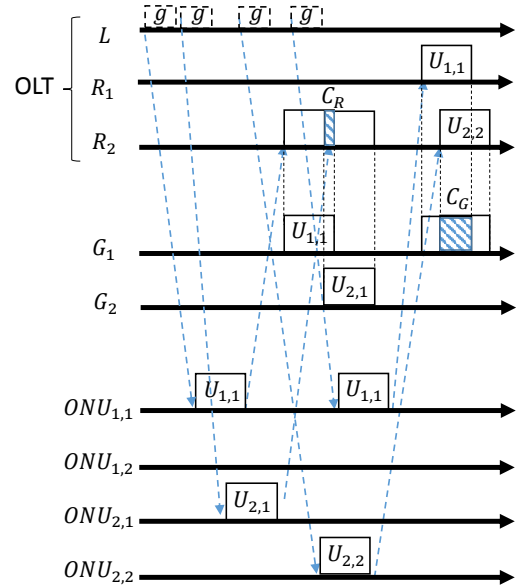}
		\caption{Illustration of the two collision domains ($C_R,~C_G$) present in flexible TWDM schemes with security. $g$ - Grant message, $U_{1,1},~U_{2,1},~U_{2,2}$ - Upstream data, $C_1,~C_2$ - Collision domains, $G_1,~G_2$ - Groups for intra-group scheduling, $R_1,~R_2$ - OLT receivers, $L$- Laser for downstream traffic at OLT.}
		\label{Fig:Collision}
\end{figure}

\section{Proposed Protocol} \label{Section:proposedprotocol}

In this section we propose a media access control (MAC) protocol for upstream data that addresses the constraints of receiver and group scheduling in a TWDM scenario. The earliest finish time with void filling (EFT-VF) \cite{Kanonakis2010} is a widely accepted TWDM scheduling scheme with excellent throughput although it may not be optimal. The MAC designed in this paper is in a close contrast with the EFT-VF scheme. The proposed MAC is referred as the constrained earliest void filling (CEVF) algorithm. 

\subsection{Protocol description - The constrained earliest void filling (CEVF) algorithm}

The CEVF algorithm uses two control messages; Request and Grant (discussed below), similar to a typical multi-point control protocol (MPCP). However, the purpose of these messages is different from a TDM based MPCP.

Request ($M_R(b)$) - This is sent by an ONU to OLT in response to a Grant message, requesting a slot to upload data. $b$ denotes the size of data in bytes that it wants to upstream to the OLT in the next cycle.

Grant ($M_G(g)$) - This is sent by the OLT to ONUs in response to a Request message. $g $ is the amount of data in bytes that an ONU is allowed to upstream.

Similar to the EFT-VF scheme, CEVF performs online scheduling of ONUs. Therefore, Grant is scheduled as soon as the Request message of an ONU is received. We assume that there are $M$ groups (equal to the number of $RN_{2,x}$), with $N$ ONUs in each group and $R$ OLT receivers. Moreover, fewer receivers might serve more groups ($R<M$), as in a typical bandwidth flexible OAN, e.g., Fig. \ref{Fig:TWDM architectures}(a) \cite{Kani2009}, \cite{Luo2013}, \cite{Effenberger2013}. Below we define two types of voids, the receiver and group voids using the illustration of Fig. \ref{Fig:CEVF}. These are essential for upstream scheduling of ONUs. An OAN with four ONUs ($ONU_{1,1},~ONU_{1,2},~ONU_{2,1},~ONU_{2,2}$) in groups of two, has been considered in Fig. \ref{Fig:CEVF}. It is assumed that the current instant is $t$ and the round trip time of $ONU_{c,d}$ ($ONU_{1,2}$ in Fig. \ref{Fig:CEVF}) is denoted by $rtt_{c,d}$ ($rtt_{1,2}$). The scheduling problem is defined as scheduling of the next $U_{c,d}$ ($U_{1,2}$), since the current $M_R(b)$ from $ONU_{c,d}$ ($ONU_{1,2}$) has been received at $t$. We illustrate that the receiver void takes care of receiver scheduling while the group void facilitates group scheduling.

\begin{figure}
	\centering
	\includegraphics[clip, page=9, trim=10.5cm 2.8cm 10.5cm 4cm, width=1\linewidth]{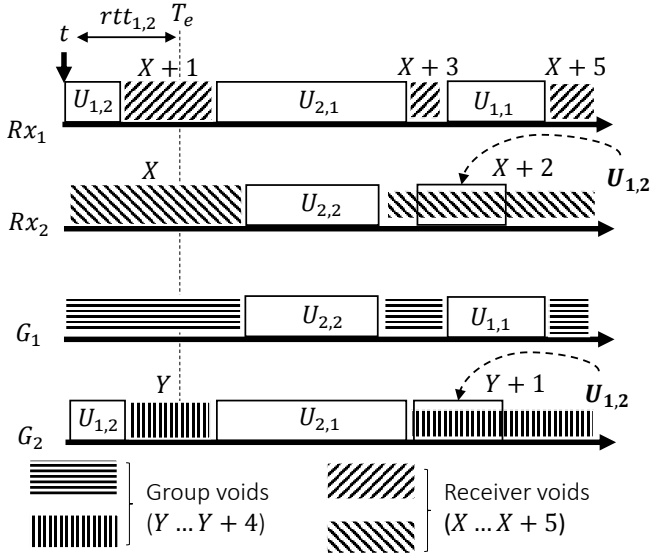}
	\caption{Illustration of receiver and group voids and CEVF working. $U_{1,1},~U_{1,2},~U_{2,1},~U_{2,2}$ - Upstream data; $G_1:~ONU_{1,1}, ~ONU_{1,2};~G_2:~ONU_{2,1}, ~ONU_{2,2}$; OLT receivers: $Rx_1,~Rx_2$; Group voids: $\{Y...Y+5\}$; Receiver voids: $\{X...X+5\}$.}
	\label{Fig:CEVF}
\end{figure}

\begin{table}
	\caption{Annotations of certain symbols used in the text}\label{tab:CAFT1} 
	\begin{tabularx}{\columnwidth}{|c|X|}
		\hline
		Symbol & Implication\\
		\hline 
		$M$ & Number of groups present\\
		$N$ & Number of ONUs present per group\\
		$R$ & Number of receivers at the OLT\\
		\bm{$V^{Rx}$} & Set of all receiver voids\\
		\bm{$V^{Rx}_r$} & Set of voids for receiver $Rx_r$\\
		$v^{Rx}$ & An element of $\bm{V^{Rx}}$\\				
		\bm{$V^G_x$} & Set of voids for group $G_x$\\
		$v^G_x$ & An element of \bm{$V^G_x$}\\
		\bm{$V^{Rx} \times V^G_x$} & The set of voids consisting of pairs $\{v^{Rx},v^G_x\}$ \\
		$S(X)$ & Starting instant of void $X$\\
		$F(X)$ & Finishing instant of void $X$\\		
		\bm{$V_{c,d}$} & Suitable voids for scheduling $ONU_{c,d}$ from \eqref{eq:start}-\eqref{eq:length}\\
		$v_{c,d}^{x,r}$ & void in which $U_{c,d}$ is scheduled\\
		$rtt_{c.d}$ & Round trip time of $ONU_{c,d}$ from the OLT\\
		$U_{c,d}$ & Upstream data for $ONU_{c,d}$ at the OLT receiver\\
		\hline
		\end{tabularx}
	\end{table}

Receiver void (\bm{$V_r^{Rx}}$;$~0\leq r \leq R$): A receiver void is the time interval during which no upstream is scheduled on a particular OLT receiver. For example, during $X$, no ONUs are scheduled to upstream on $Rx_2$ (Fig. \ref{Fig:CEVF}). Therefore, $X$ and $X+2$ are the receiver voids for $Rx_2$. The set of receiver voids for $Rx_2$ is denoted by \bm{$V_2^{Rx}$} (=$\{X,X+2\}$ in Fig. \ref{Fig:CEVF}). The start and finish times of a void $X$ are denoted by $S(X)$ and $F(X)$ respectively. The length of the void is denoted by $L(X)$ ($=F(X)~-~S(X)$). The complete set of receiver voids from all receivers is denoted by \bm{$V^{Rx}$}. Therefore \bm{$V^{Rx}=\{V_1^{Rx},V_2^{Rx},...V_R^{Rx}\}$} is a concatenation of vectors \bm{$V_r^{Rx}$}, assuming that the OLT has $R$ receivers. For every receiver $r$, there is an infinitely long void starting at $max\{max\{S(\bm{V^{Rx}_r})\},t\}$ (horizon void).

Group Void {\bm{$V_x^G$};$~0\leq x\leq M$ }: A group void is the time interval during which no upstream is scheduled from the ONUs of a particular group. Therefore $Y$ is a group void, as $U_{1,2}$ and $U_{2,1}$ are not scheduled between $S(Y)$ and $F(Y)$. The voids of a particular group $G_2$ define the set \bm{$V^G_2}$. Moreover, for every group $x$, there is an infinitely long void starting at $max\{max\{S(\bm{V^{G}_x)}\},t\}$. This is the horizon void for the respective group.

The CEVF scheme schedules any ONU; $ONU_{c,d}$ (e.g., $ONU_{1,2}$) in its respective group void \bm{$V^G_x$} (\bm{$V^G_2$}) and a suitable $Rx_r$. The group void of a particular group enforces the constraint that only one ONU from that group can upstream at any instant. This prevents group collision. Whereas receiver void enforces the condition that only one upstream is scheduled to an OLT receiver at any instant, thereby preventing receiver collision. On receiving an $M_R(b)$ from $ONU_{c,d}$, the OLT first calculates the earliest instant ($T_e$) at which $U_{c,d}$ can be scheduled. If $t$ is the current instant, then $T_e$ is defined by \eqref{eq:Te} and is illustrated in Fig. \ref{Fig:CEVF}. 

\begin{equation} \label{eq:Te}
T_e=t+rtt_{c,d}
\end{equation}

Therefore at $t$, the OLT calculates the scheduling instant for the granted $g$ bytes of $ONU_{1,2}$ ($b$ was requested in $M_R(b)$ but the OLT grants $g$; $g\leq b$). $U_{1,2}$ is scheduled to an intersection of intervals $X+2$ and $Y+1$ which is the intersection of individual receiver and group voids. In order to calculate the grant scheduling instant, we first define void intersection.

\subsubsection*{Void intersection}

Intersection of two voids ($C$), each from \bm{$V^{Rx}$} and \bm{$V_x^G$} ($A$ and $B$ respectively) is defined in \eqref{eq:intersection}. The start time of $C$ ($S(C)$) is the maximum of the starting time of the constituting elements ($A$ and $B$). Similarly, the finish time of $C$ ($F(C)$) is the minimum of the finish times of the constituting elements. For $C$ to be a valid void, $L(C)\geq 0$ must be satisfied. 

\begin{equation} \label{eq:intersection}
\begin{split}
C=A \cap B \implies S(C)=max(S(A),S(B))\\
and~F(C)=min(S(A),S(B))
\end{split}
\end{equation}

However, scheduling $U_{c,d}$ also mandates consideration of $T_e$ \eqref{eq:Te}. In order to define the set of potential voids for scheduling $U_{c,d}$ given by \bm{$V_{c,d}$}, we first assume that $v^{Rx}$ and $v_x^G$ are individual elements of \bm{$V^{Rx}$} and \bm{$V_x^G$}. Since scheduling $U_{c,d}$ requires finding an intersection of two voids, each from \bm{$V^{Rx}$} and \bm{$V_x^G$}, we define \bm{$V^{Rx} \times V_x^G$} to be the Cartesian product of \bm{$V^{Rx}$} and \bm{$V^G_x$}.  Therefore \bm{$V^{Rx} \times V_x^G$} consists of pairs $\{v^{Rx},v_x^G\}$. A potential void $v_{c,d}$ (an element of the vector \bm{$V_{c,d}$}), is an intersection of $v^{Rx}$ and $v_x^G$ as defined in \eqref{eq:intersection}. Since $v_{c,d} = v^{Rx} \cap v_x^G$ with consideration of $T_e$; $S(v_{c,d})$ is the maximum of $T_e$, $S(v^{Rx})$ and $S(v_x^G)$, \eqref{eq:start}. The finish time of $v_{c,d}$ is derived in \eqref{eq:finish} using \eqref{eq:intersection}. 

\begin{equation}\label{eq:start}
S(v_{c,d}) = max(T_e,S(v^{Rx}), S(v_x^G))
\end{equation}

\begin{equation}\label{eq:finish}	
F(v_{c,d}) = min(F(v^{Rx}), F(v_x^G))
\end{equation}

It is essential for CEVF to check that each potential void $v_{c,d}$, should be able to accommodate the $g$ bytes of $U_{c,d}$. Therefore every $v_{c,d}$ must satisfy the length criteria given by \eqref{eq:length}. The first term ($\frac{g}{l}$) in \eqref{eq:length} is the time taken to upstream $g$ bytes with link rate $l$. The receiver tuning times at the start of $U_{c,d}$ is accounted by $T_{grd}$ in \eqref{eq:length}. The receiver tuning times are considered to be included by $T_{grd}$.

\begin{equation} \label{eq:length}
\bm{V_{c,d}}=\Bigg\{v_{c,d}| L(v_{c,d}) \geq \frac{g}{l} + T_{grd}\Bigg\}	
\end{equation}

Finally, CEVF searches for the potential void with the earliest starting time. Therefore the starting time ($t_s$) for $U_{c,d}$ ($U_{1,2}$) at the OLT is given by \eqref{eq:mintime}. 

\begin{equation} \label{eq:mintime}
t_s=min(S(\bm{V_{c,d}}))
\end{equation}

Once a suitable void - $v_{c,d}^{x,r}$ (\eqref{eq:mintime}) has been found, the ONU is scheduled in that interval. The receiver to which $U_{c,d}$ is scheduled ($U_{1,2}$ is scheduled to $Rx_2$ in Fig. \ref{Fig:CEVF}) is taken from the receiver void corresponding  to $t_s$ in \eqref{eq:mintime}. The corresponding voids in \bm{$V_r^{Rx}$} and \bm{$V_x^G$} (e.g., $X+2$ and $Y+1$ in Fig. \ref{Fig:CEVF}) are updated by subtracting the upstream interval ($\frac{g}{l}+ T_{grd}$) from the respective voids. This may split the respective voids into two voids each. If any of the resultant voids are smaller than $2~\times~T_{grd}$, they are omitted from both \bm{$V_r^{Rx}$} and \bm{$V_x^G$}. 

The grant scheduling instant ($t_g$) is calculated by subtracting $rtt_{c,d}$ from $t_s$ \eqref{eq4}. It is assumed that the ONUs start upstreaming to the OLT immediately on receiving $M_G(g)$. If a limited grant allocation scheme is adopted by the OLT, then $g$ is upper bounded by some $lim$ \cite{Gravalos2014}.

\begin{equation} \label{eq4}
t_g=t_s-rtt_{c,d}
\end{equation}

\subsection{Algorithmic Reduction of the CEVF Scheme}

The CEVF algorithm finds overlapping intervals between \bm{$V^{Rx}$} and \bm{$V_x^G$} to schedule $U_{c,d}$. To achieve this, CEVF must consider one void, each from \bm{$V^{Rx}$} and \bm{$V_x^G$} and find overlapping intervals that satisfy the condition of \eqref{eq:length}. Therefore, CEVF must search for the appropriate overlapping intervals (voids) - $v_{c,d}$ within the set \bm{$V^{Rx} \times V_x^G$}. It is apparent that a grid search is necessary among all elements of \bm{$V^{Rx} \times V_x^G$} for the required void $v_{c,d}^{x,r}$. This would result in a search through $|\bm{V^{Rx}}| \times |\bm{V_x^G}|$ possibilities, where $|a|$ denotes the cardinality of the set $a$. Since $M\times N$ ONUs are scheduled on the receivers and each receiver is associated with a horizon void, $|\bm{V^{Rx}}|$ is upper limited by  $(M\times N)+R$. Moreover, $N$ ONUs are present in every group. So, $|\bm{V_x^G} | \leq N+1$. Therefore, the algorithmic complexity for CEVF would be $O((N+1)((M \times N)+R))$. In this subsection we modify the CEVF scheme to linearise the algorithmic complexity of the original CEVF. For this purpose we define ordering of voids and void hopping using the illustrations of Figs. \ref{Fig:CEVF} and \ref{Fig:orderingexample}. In Fig. \ref{Fig:orderingexample}, the timelines for $V^{Rx}_1$ and $V^{Rx}_2$ and $V^G_2$ from Fig. \ref{Fig:CEVF} have been illustrated. Although the discussion for void ordering and void hopping has been done with respect to $A$ $(\epsilon V^{Rx})$, similar void manipulations are applicable for $B$ $(\epsilon V^G_x)$.

\subsubsection{Void ordering} Ordering between two voids $A$ and $A+1$ is defined by \eqref{eq:ordering}. This implies that $A+1$ is ordered after $A$, if the starting time of $A+1$ is later than that of $A$. Therefore $X+1$ and $Y+1$ are ordered after $X$ and $Y$ as $S(X+1)>S(X)$ and $S(Y+1)>S(Y)$ respectively (Figs. \ref{Fig:CEVF}, \ref{Fig:orderingexample}).

\begin{equation} \label{eq:ordering}
 A+1 > A~iff~S(A+1)~>~S(A)~\forall~A,A+1~\epsilon~\bm{V^{Rx}}
\end{equation}

In the modified CEVF scheme (to reduce the complexity of CEVF) \bm{$V^{Rx}$} and \bm{$V^G_x$} map to \bm{$VO^{Rx}$} and \bm{$VO^G_x$}, which are ordered sets of receiver and group voids respectively. Therefore, \bm{$VO^{Rx}$} and \bm{$VO^G_x$} are given by \eqref{eq:VORX} and \eqref{eq:VOGX}.

\begin{equation} \label{eq:VORX}
\begin{split}
\bm{VO^{Rx}}=\{v_1, v_2 ...v_r\}~ where ~ S(v_1)<S(v_2)<...S(v_r)~|\\
v_1...v_r \epsilon \bm{V^{Rx}}
\end{split}
\end{equation}

\begin{equation} \label{eq:VOGX}
\begin{split}
\bm{VO^G_x}=\{v_1, ...v_g\}~ where ~ S(v_1)<S(v_2)<...S(v_g)~|\\
v_1...v_g \epsilon \bm{V^G_x}
\end{split}
\end{equation}

\subsubsection{Void hopping} The condition for hopping from one void $A$, to the next void $A+1$ is defined in \eqref{eq:Hopping}. This implies that on hopping to the next void $A+1$, no intermediate void $A'$ with $S(A)<S(A')<S(A+1)$ is skipped. However, hopping is performed within respective elements of \bm{$V^{Rx}$} or \bm{$V^G_x$} only (and not inter-set). Therefore, void hopping is performed from $X$ to $X+1$ and from $Y$ to $Y+1$ in Figs. \ref{Fig:CEVF} and \ref{Fig:orderingexample}.

\begin{equation}\label{eq:Hopping}
\begin{split}
C=A\rightarrow A+1;~~C>A,~\nexists~A'|(A'>A~\&~C>A'); \\
A,A+1\epsilon~\bm{V^{Rx}}
\end{split}
\end{equation}

\subsubsection{Algorithm}If $A~\epsilon~ V^{Rx}$ and $B~ \epsilon~ V_x^G $, then during the search process of the modified CEVF scheme, hopping is performed as; 

\begin{itemize}
	\item   $A \rightarrow A + 1~if~F(A) \leq F(B)$
	\item	$B \rightarrow B + 1~if~F(A) > F(B)$
\end{itemize}

The search is performed till a void is found that satisfies the condition \eqref{eq:length} (i.e., $v_{c,d}^{x,r}$). The algorithm terminates on finding the first such void. Therefore, instead of populating the whole set of $V_{c,d}$ and then finding $v_{c,d}^{x,r}$ from \eqref{eq:mintime} as done in the original CEVF, the modified algorithm directly finds $v_{c,d}^{x,r}$. Working principle of the modified algorithm is discussed below using the example of Fig. \ref{Fig:orderingexample}.

\begin{enumerate}
	\item $X \rightarrow X+1$ as $L(X \cap Y) <\frac{g}{l}+T_{grd}$
	\item $Y \rightarrow Y+1$ as $L(X+1 \cap Y) <\frac{g}{l}+T_{grd}$
	\item $X+1 \rightarrow X+2$ as $L(X+1 \cap Y+1) <\frac{g}{l}+T_{grd}$
	\item $v_{c,d}^{x,r}=(X+2 \cap Y+1)$ as $L(X+2 \cap Y+1)\geq \frac{g}{l}+T_{grd}$. Modified CEVF terminates.
\end{enumerate}
In the next section we prove the optimality, convergence and complexity of the modified CEVF scheme. 

%

\begin{figure} 
	\centering
	\includegraphics[clip, page=10, trim=10cm 8cm 10cm 4.3cm, width=0.49\textwidth]{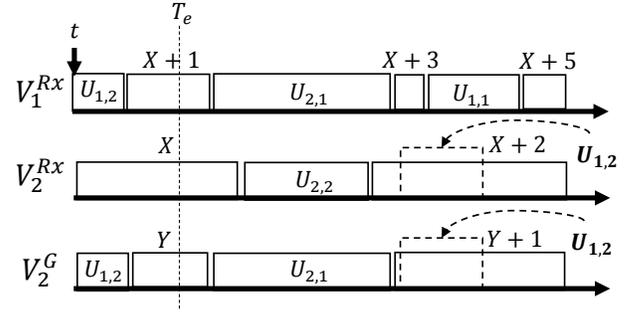}
	\caption{Illustration of void ordering and void hopping.} \label{Fig:orderingexample}
\end{figure} 

\subsection{Analysis of the Modified CEVF Scheme}\label{Section:Linearise}

The modified CEVF scheme is comprised of an online protocol which schedules ONUs on receiving the respective $M_R(b)$, and an offline protocol that maintains the ordering of voids after scheduling has been performed. Below we discuss two theorems associated with the performance of these protocols.

\subsection{Online protocol}

\begin{theorem} \label{theorem1}
	The modified CEVF algorithm requires at-most $N+(N\times M)+R$ steps to find the void $v_{c,d}^{x,r}$. Moreover, the algorithm converges, has linear complexity and is optimal.
\end{theorem}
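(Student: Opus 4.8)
The plan is to prove the three claims of Theorem~\ref{theorem1} --- the step bound, convergence, and optimality --- by analyzing the void-hopping search procedure defined in the previous subsection. The key structural observation is that the search maintains two pointers, one into the ordered set \bm{$VO^{Rx}$} and one into \bm{$VO^G_x$}, and that each iteration advances exactly one of them. The hopping rule ($A\rightarrow A+1$ if $F(A)\leq F(B)$, else $B\rightarrow B+1$) ensures that whenever a candidate intersection $v_{c,d}=v^{Rx}\cap v_x^G$ fails to accommodate $U_{c,d}$, we discard the void with the earlier finish time, since that void cannot intersect any later void of the other set in a way that yields additional usable length. I would first establish this \emph{no-missed-intersection} invariant as a lemma-style claim: after each hop, no pair $\{v^{Rx},v_x^G\}$ whose intersection is valid and satisfies \eqref{eq:length} and that would have started earlier than the current pointers has been skipped. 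This follows because the voids are sorted by start time and the discarded void's finish time precedes the finish time of every remaining void in the opposing set, so it can contribute no feasible intersection with any unexamined void.

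Given that invariant, the step bound is a counting argument. Each iteration advances at least one pointer, and neither pointer ever moves backward. The pointer into \bm{$VO^G_x$} can take at most $N+1$ values, since by the cardinality bound established in the reduction subsection $|\bm{V_x^G}|\leq N+1$. The pointer into \bm{$VO^{Rx}$} can take at most $(M\times N)+R$ values, since $|\bm{V^{Rx}}|$ is upper limited by $(M\times N)+R$. Because every iteration advances one pointer and the pointers are monotone, the total number of iterations is bounded by the sum of the two ranges, namely $(N+1)+((M\times N)+R)$, which is of the claimed order $N+(N\times M)+R$. Convergence is then immediate: the procedure cannot iterate beyond this finite bound, and it terminates either upon finding the first $v_{c,d}^{x,r}$ satisfying \eqref{eq:length} or upon reaching the horizon voids, whose infinite length guarantees that \eqref{eq:length} is eventually met. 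Linear complexity follows because the step count grows linearly in the total number of scheduled ONUs $M\times N$ and the system parameters $N$ and $R$.

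The main obstacle, and the part I would treat most carefully, is \textbf{optimality}. The claim is that the first void returned by the linear search coincides with the minimizer $t_s=\min(S(\bm{V_{c,d}}))$ that the original CEVF computes by exhaustive grid search over \bm{$V^{Rx}\times V_x^G$}. The subtlety is that the void with the earliest \emph{start} time is sought, whereas the hopping rule advances based on \emph{finish} times. I would argue that because both ordered sets are sorted by start time and the pointers only advance, the sequence of candidate intersections examined has non-decreasing start times; hence the first feasible intersection encountered has the smallest start among all feasible intersections --- provided the no-missed-intersection invariant guarantees that no earlier-starting feasible pair was bypassed. The crux is verifying that discarding the void with the smaller finish time never eliminates a feasible intersection with a strictly earlier start than the eventual answer. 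Since the start of any intersection is $\max(T_e,S(v^{Rx}),S(v_x^G))$ by \eqref{eq:start} and its feasible length is bounded by the smaller finish time via \eqref{eq:finish}, I would show that the discarded void contributes only to intersections dominated (in both start time and available length) by intersections still reachable, completing the optimality argument.
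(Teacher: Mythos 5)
Your plan follows essentially the same route as the paper's proof: a monotone two-pointer traversal over the ordered sets, a per-hop rejection lemma showing that skipped pairs cannot satisfy \eqref{eq:length} (the paper's Parts I and II), termination via the horizon voids, a counting argument giving on the order of $N+(N\times M)+R$ steps, and optimality from the fact that examined intersections have non-decreasing start times once no feasible pair has been skipped. However, one supporting claim in your justification of the invariant is genuinely false as stated: you assert that ``the discarded void's finish time precedes the finish time of every remaining void in the opposing set.'' Within a single group (or a single receiver) voids are disjoint intervals, so start order implies finish order; but \bm{$V^{Rx}$} is the \emph{concatenation} of the void sets of all $R$ receivers, and voids on different receivers overlap in time, so a later-starting element of \bm{$VO^{Rx}$} can perfectly well finish before the discarded void. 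The invariant itself survives, but for the reason the paper computes in Part I: if $A$ is discarded because $F(A)\leq F(B)$ and $L(A\cap B)<\frac{g}{l}+T_{grd}$, then for any later opposing void $B'$ one has $S(A\cap B')\geq S(A\cap B)$ by start ordering, and $F(A\cap B')=\min(F(A),F(B'))\leq F(A)=F(A\cap B)$, hence $L(A\cap B')\leq L(A\cap B)$ \emph{regardless} of how $F(B')$ compares to $F(A)$. The cap comes from the discarded void's own finish time bounding the intersection, not from finish-time monotonicity of the opposing set. Relatedly, your closing phrase ``dominated \ldots by intersections still reachable'' should read ``dominated by the already-examined failed intersection'': the correct (and stronger) statement, which is what the paper proves, is that every skipped pair is infeasible, not merely redundant.

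A second point to make explicit: your invariant, as phrased, only rules out pairs formed from the discarded void and \emph{later} opposing voids, which is the paper's Part I. You must also rule out pairs $\{a_k,b_{j-g}\}$ combining the current void with opposing voids already passed --- the paper's Part II. This does follow from your per-hop lemma applied inductively: each such pair was skipped at the earlier hop in which $b_{j-g}$ was discarded (when the other pointer stood at some $a_{k-r}$ with $F(a_{k-r})>F(b_{j-g})$ and the length test failed), and the same dominance computation with roles reversed shows $L(a_k\cap b_{j-g})<\frac{g}{l}+T_{grd}$ since $a_k>a_{k-r}$. Without this case the claim that the first feasible intersection found attains $t_s=\min(S(\bm{V_{c,d}}))$ in \eqref{eq:mintime} is not fully supported. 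Your step count $(N+1)+((M\times N)+R)$ matches the paper's $N+(N\times M)+R$ up to the off-by-one from the horizon voids and is immaterial to the stated complexity.
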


\begin{proof}
	
For this proof we assume that $a_{k-r},a_k,a_{k+r}\epsilon \bm{V^{Rx}};~\forall ~r\epsilon I,~r>0~|a_{k+r}>a_k>a_{k-r}$ and $b_{j-g},b_j,b_{j+g} \epsilon \bm{V_x^G};~\forall~~g\epsilon I,~g>0~|b_{j+g}>b_j>b_{j-g}$. We prove the theorem for the scenario $F(a_k)<F(b_j)$, in which case $a_k \rightarrow a_{k+1}$ must be performed. We show that $\{a_k,b_{j+g}\}$ and $\{a_k,b_{j-g}\}$ pairs $\forall g > 0$ do not need to be inspected if $F(a_k)<F(b_j)$ and $L(a_k \cap b_j) <\frac{g}{l}+T_{grd}$ (i.e., $a_k$ and $b_j$ do not satisfy the length criteria \eqref{eq:length} and hence their intersection \eqref{eq:start}, \eqref{eq:finish} is not a potential void). Below we prove this two parts.

\subsubsection*{Part I} \label{Section: PartI} First we show that any $\{a_k,b_{j+g}\}$ need not be considered when performing $a_k \rightarrow a_{k+1}$. It is known that, $b_{j+g}>b_j \implies S(a_k \cap b_{j+g})~\geq~S(a_k\cap b_j)$ as $S(b_{j+g})\geq S(b_j)$. Also, $F(a_k)<F(b_j) \implies F(a_k\cap b_{j+g})\leq F(a_k\cap b_j)$. Together, these conditions imply that $L(a_k\cap b_{j+g} )\leq L(a_k\cap b_j)$. Since $L(a_k \cap b_j) <\frac{g}{l}+T_{grd}$, therefore $L(a_k\cap b_{j+g})< \frac{g}{l} + T_{grd}$. So no $a_k\cap b_{j+g}$ can satisfy the length criteria of voids given by \eqref{eq:length}. Therefore, all $(a_k,b_{j+g})$ pairs can be rejected.  

\subsubsection*{Part II}We also need to show, when performing $a_k~\rightarrow~a_{k+1}$, we do not need to consider any $\{a_k,b_{j-g}\}$. For this we consider $b_{j-g+1}$ to be the next ordered element after $b_{j-g}$ and let $a_{k-r}$ be the element, with which $b_{j-g}$ was having intersection, such that $F(a_{k-r})>F(b_{j-g})$. Therefore, we must have performed; $b_{j-g} \rightarrow b_{j-g+1}$, since $F(a_{k-r})>F(b_{j-g})$ and $L(a_{k-r}\cap b_{j-g})~<~\frac{g}{l}+T_{grd}$. From the proof of Part I, these conditions imply $L(a_k\cap b_{j-g})~<~\frac{g}{l}+T_{grd}$, as $a_k>a_{k-r}$. 

Thus each step of the modified CEVF scheme rejects multiple solutions.

A similar proof can be given to illustrate that $\{a_{k+r},b_j\}$ and $\{a_{k-r},b_j\}$ pairs $\forall r> 0$ do not need to be inspected if $F(a_k)>F(b_j)$ and $b_j~\rightarrow~b_{j+1}$ is being performed. 

Since the last voids in \bm{$V^{Rx}$} and \bm{$V_x^G$} ($max(S(\bm{V^{Rx}}))$ and $max(S(\bm{V_x^G}))$ respectively) end at infinity (horizon void), we are guaranteed to find at least one void which meets the length criteria given by \eqref{eq:length}. \textbf{Moreover, the modified CEVF selects the void with the lowest start-time}. This results in the optimality of the algorithm. In the worst case, CEVF will reach the horizon voids of both \bm{$V^{Rx}$} and \bm{$V_x^G$}. The intersection of horizon voids on \bm{$V^{Rx}$} and \bm{$V_x^G$} is infinitely long and will always meet the length criteria. In this worst case we traverse both \bm{$V^{Rx}$} and \bm{$V_x^G$} at most once. Since there are $R$ receivers, $M$ groups and $M \times N$ ONUs, the number of voids encountered while passing through \bm{$V^{Rx}$} and \bm{$V_x^G$} is $N+M\times N+R$ and hence as many steps are required in the worst case. Moreover, CEVF has a complexity of $O(N\times M)$ ($R,N<<N\times M$). Since the optimal void is found in finite number of steps, the modified CEVF algorithm is convergent.

\end{proof}

\subsection{Offline Protocol}

The offline protocol takes care of ordering the newly created voids after scheduling a particular ONU. 

\begin{theorem}
	Ordering the new void requires $\log_2(M\times N+1)$ and $\log_2(N+1)$ steps in $V^{Rx}$ and $V^G_x$ respectively.
\end{theorem}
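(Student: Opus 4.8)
The plan is to recognize that Theorem~2 is fundamentally a statement about the cost of inserting a newly created void into an already sorted structure. After the online protocol schedules $U_{c,d}$ in the void $v_{c,d}^{x,r}$, the corresponding receiver void in $\bm{V^{Rx}}$ and the group void in $\bm{V^G_x}$ are each split (by subtracting the upstream interval $\frac{g}{l}+T_{grd}$) into at most two smaller voids. These new voids must be placed back into the ordered sets $\bm{VO^{Rx}}$ and $\bm{VO^G_x}$ defined in \eqref{eq:VORX} and \eqref{eq:VOGX} so that the start-time ordering of \eqref{eq:ordering} is preserved, which is exactly what the online protocol of Theorem~\ref{theorem1} relies upon. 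So the task reduces to bounding the number of comparisons needed to find the correct insertion position in a sorted list.

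First I would fix the data structure: maintain $\bm{VO^{Rx}}$ and each $\bm{VO^G_x}$ as sorted arrays (or balanced search trees) keyed by void start time $S(\cdot)$, since ordering is defined purely through $S(\cdot)$ in \eqref{eq:ordering}. From the cardinality bounds already established in the complexity discussion preceding Theorem~\ref{theorem1}, $|\bm{V^{Rx}}| \leq (M\times N)+R$ and $|\bm{V^G_x}| \leq N+1$; it suffices to argue the receiver set holds at most $M\times N+1$ relevant entries and the group set at most $N+1$. The next step is to observe that a new void's start time is a single scalar, and that locating its rank in a sorted array of size $n$ by binary search costs $\lceil \log_2(n+1)\rceil$ comparisons. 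Substituting $n = M\times N$ for the receiver structure yields $\log_2(M\times N+1)$ steps, and $n = N$ for the group structure yields $\log_2(N+1)$ steps, which is precisely the claimed bound.

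The key steps, in order, are: (i) state that splitting a void produces at most two fragments whose start times are known scalars; (ii) note that reinsertion preserves the total ordering exactly when each fragment is placed at its rank determined by $S(\cdot)$; (iii) invoke the standard binary-search bound $\lceil \log_2(n+1)\rceil$ on the two sorted sets whose sizes are bounded by $M\times N$ and $N$ respectively; and (iv) conclude the two logarithmic expressions. I would also briefly confirm that the omission of fragments smaller than $2\times T_{grd}$ (deletions) does not increase the search cost, since a deletion only shrinks the set.

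The main obstacle I anticipate is not the binary-search argument itself, which is textbook, but the bookkeeping needed to justify the exact arguments $M\times N+1$ and $N+1$ inside the logarithms rather than looser bounds such as $(M\times N)+R$. This requires arguing carefully that the horizon voids and the $R$ per-receiver infinite voids either do not participate in the ordered search or are absorbed into the stated counts, and that the split produces fragments that land among the $M\times N$ scheduled ONU voids. Pinning down whether the claimed figure should be $\log_2$ versus $\lceil \log_2 \rceil$, and whether an $O(\log\cdot)$ amortized versus worst-case insertion (which for a plain sorted array would incur an additional linear shift cost) is intended, is the subtle point; I would resolve this by assuming a tree-based ordered set so that the logarithmic comparison count is the genuine reinsertion cost.
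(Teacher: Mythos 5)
Your proposal is correct and follows essentially the same route as the paper: the paper's proof is exactly a binary-search insertion into the ordered void sets, using the cardinality bounds $|V^G_x|\leq N+1$ (since $N$ ONUs per group) and at most $M\times N+1$ voids per receiver list $V^{Rx}_r$ in the worst case where all ONUs are scheduled to one receiver --- which also resolves the subtlety you flagged about $M\times N+1$ versus $(M\times N)+R$, since the insertion is performed in a single receiver's list rather than the concatenated set. The paper does not address your finer points about ceilings or array-shift costs, so your tree-based reading is a reasonable (and slightly more careful) gloss on the same argument.
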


\begin{proof}
	A binary search must be performed to order the newly created void according to \eqref{eq:ordering}. Since there are $N$ ONUs in a group, there can be a maximum of $N+1$ voids in any $V^G_x$. While a receiver $Rx_r$ can have all the ONUs scheduled to it in the worst case. As such, there will be $M\times N+1$ voids in $V^{Rx}_r$. Therefore, the binary search algorithms require $\log_2(M\times N+1)$ and $\log_2(N+1)$ steps for $V^{Rx}$ and $V^G_x$ respectively. The associated complexities are $O(\log_2 (M\times N))$ and $O(\log_2N)$ respectively.
\end{proof}

\section{Performance Results} \label{Section:performance}

In this section we compare the performance figures of the proposed CEVF algorithm with that of EFT-VF with respect to throughput figures. The simulations have been performed in OMNET++. We assume a scenario in which the OAN has 64 ONUs. Performance figures are illustrated for the scenarios in which each $RN_{2,x}$ connects to either $8$ or $4$ ($=N$) ONUs respectively. The OLT and ONU transceivers are assumed to be of 1Gbps each. The number of OLT receivers ($R$) is varied between 2, 4 and 8 to support different line rates ($r=31.25Mbps,~62.6Mbps,~125Mbps$ respectively) at the ONUs. Each ONU is assumed to be equipped with a buffer of size $1Gb$. Self-similar traffic with Pareto distributed on and off periods are generated homogeneously by the ONUs. The shape parameters for the on and off periods have been assumed to be $1.2$ and $1.4$ respectively. The maximum packet size is assumed to be limited by $1500~Bytes$ while the minimum burst size is of 1 packet.  

\begin{figure}[t!]
	\centering
	\begin{subfigure}[t]{1\linewidth}
		\centering
		\includegraphics[clip, page=11, trim=10.5cm 5.7cm 10.8cm 6cm, width=1\linewidth]{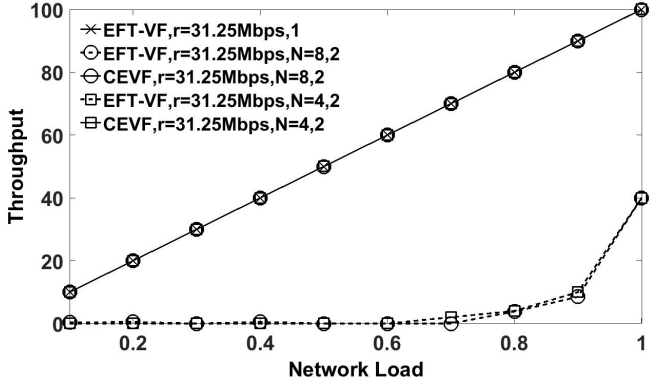}
		\caption{Throughput at $31.25Mbps$ for (1): EFT-VF applied to Fig \ref{Fig:TWDM architectures}a \cite{Luo2013} and (2): EFT-VF, CEVF applied to Fig \ref{Fig:TWDM architectures}d-e \cite{Tsalamanis2004}, \cite{Bhar2014}.} \label{Fig:throughput3125}
	\end{subfigure}	
	~
	\begin{subfigure}[t]{1\linewidth}
		\centering
		\includegraphics[clip, page=12, trim=10.5cm 5.7cm 10.8cm 6cm, width=1\linewidth]{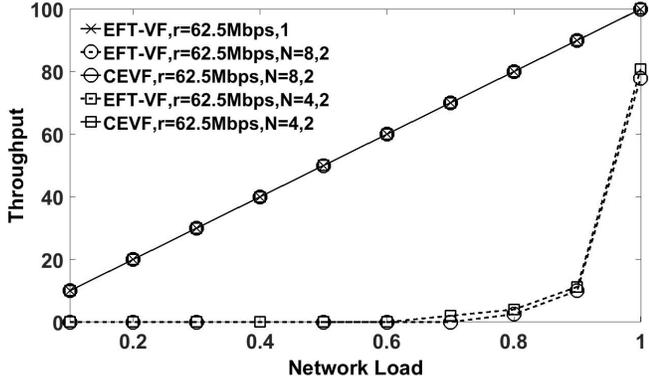}
		\caption{Throughput at $62.5Mbps$ for (1): EFT-VF applied to Fig \ref{Fig:TWDM architectures}a \cite{Luo2013} and (2): EFT-VF, CEVF applied to Fig \ref{Fig:TWDM architectures}d-e \cite{Tsalamanis2004}, \cite{Bhar2014}.} \label{Fig:throughput625}
	\end{subfigure}
	~
	\begin{subfigure}[t]{1\linewidth}
		\centering
		\includegraphics[clip, page=13, trim=10.5cm 5.7cm 10.8cm 6cm, width=1\linewidth]{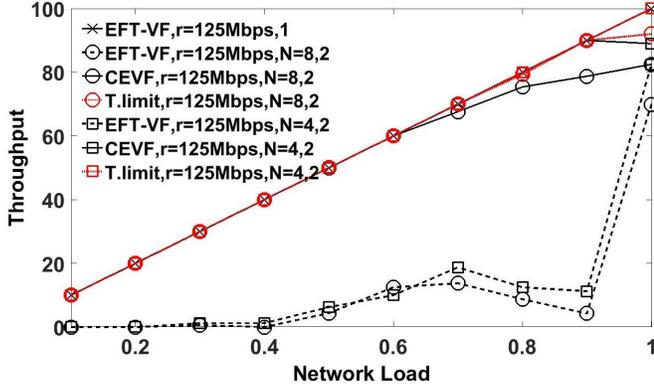}
		\caption{Throughput at $125Mbps$ for (1): EFT-VF applied to Fig \ref{Fig:TWDM architectures}a \cite{Luo2013} and (2): EFT-VF, CEVF applied to Fig \ref{Fig:TWDM architectures}d-e \cite{Tsalamanis2004}, \cite{Bhar2014}.} \label{Fig:throughput125}
	\end{subfigure}
	\caption{Comparison of the throughput performance  (as a function of the network load) between CEVF and EFT-VF for different $N$ and $r$ when applied to different TWDM schemes.}\label{Fig:throughput}
\end{figure}

\subsection*{Comparison of throughput}

A comparison of throughput has been performed in Fig. \ref{Fig:throughput} for the EFT-VF \cite{Kanonakis2010} and the CEVF schemes, assuming that these schemes are implemented either on;

\begin{enumerate}
\item the typical TWDM architecture of Fig. \ref{Fig:TWDM architectures}a or \item on flexible and secure TWDM architectures of Fig. \ref{Fig:TWDM architectures}d-e.
\end{enumerate} 
We discuss the throughput figures achieved by EFT-VF and the proposed CEVF schemes for the above scenarios.

The comparisons have been done for $r=31.25Mbps$ in Fig. \ref{Fig:throughput3125}. It is observed that CEVF achieves a higher throughput compared to EFT-VF, when applied to the flexible and secure TWDM schemes (Fig. \ref{Fig:TWDM architectures}d-e). This is attributed to the significant packet-loss associated with EFT-VF scheme due to group collisions in all scenarios $N=4,~8$. CEVF achieves considerably higher throughput figures as it considers both receiver and group scheduling. High throughput figures are also attained when EFT-VF is applied to Fig. \ref{Fig:TWDM architectures}a due to the absence of group collision domain in such schemes. The throughput for EFT-VF applied to Fig. \ref{Fig:TWDM architectures}a and CEVF schemes, at $\rho=1$ is observed to be $\sim 99\%$. An increase in the line-rate to $62.5Mbps$ (Fig. \ref{Fig:throughput625}) results in similar throughput performance as in Fig. \ref{Fig:throughput3125}. 

At $r=125Mbps$, the throughput of CEVF saturates approximately at eighty percent and eighty-five percent for for $N=8$ and $N=4$ respectively. This is attributed to two reasons, that are discussed below.

\begin{enumerate}
	\item The additional constraint of group collision in CEVF, results in formation of unused voids. The burstiness of the Pareto distributed on-off traffic sources also increases with network load. This makes it difficult for CEVF to find appropriate voids. Moreover, the chances of group collision increase with an increase in $N$, thereby making group scheduling more difficult.
	
	\item As discussed earlier, presence of the group collision domain imposes the condition that the $N$ ONUs of a particular group communicate on a particular feeder fiber and switch port at the OLT. Since one-to-many port mapping is not possible at the switch port, the ONUs of any group can be allocated the effective line rate of one OLT receiver only. This limits the line rate ($r$) of ONUs in Fig. \ref{Fig:TWDM architectures}d-e according to: $N\times r \leq 1Gbps$ (assumed data-rate of OLT receivers). Moreover, for scenarios with high traffic load and $N\times r = 1Gbps$, the throughput declines, as observed from Fig. \ref{Fig:throughput125}. This is because, the limited granting scheme is associated with an upper limit of throughput at high network loads (when $N\times r = 1Gbps$). This phenomenon is explained below using a theoretical model.
\end{enumerate}
 
We derive the upper-bound for the throughput in a limited granting scheme by constructing a Markov chain. The bandwidth granted to an ONU is assumed to be upper limited by $lim$. For the architectures of Fig. \ref{Fig:TWDM architectures}d-e, at high traffic-load conditions and $N \times r=1Gbps$, the $N$ ONUs of a particular group will share the bandwidth of one OLT receiver only, as discussed earlier. This corresponds to a time division multiplexed scenario with the limited granting scheme (upper limit of bandwidth granted - $lim$) \cite{Gravalos2014}, over the effective bandwidth of one OLT receiver. It also allows us to define the concept of time cycle. Therefore, the inter-scheduling duration of every ONU is upper-limited by $lim \times N$ (length of time-cycle). The state of the ONU buffer at the end of every time cycle ($B_i$ at the end of $i^{th}$ time cycle) is assumed to be the state variable. Furthermore, for a network load of $\rho$, the length of the time cycle is assumed to be of $T= \rho \times lim \times N$ duration ($lim \times N$ is assumed to be $2ms$). A discrete time Markov chain is formulated for this scenario. The state transitions correspond to the change in buffer-state of an ONU, between the end of the previous and the present time cycles. 

The following conditions can occur for the state variable at the end of two consecutive time cycles ($i-1$ and $i$ respectively). The corresponding state transition probabilities are also mentioned alongside.

\begin{enumerate}
	\item $B_{i}=0$ and $0 \leq B_{i-1} \leq lim$. State transition probability: $\sum_{k=0}^{lim-B_{i-1}} \frac{e^{-\lambda T} (\lambda T)^k}{k!}$.
	\item $0<B_{i}\leq B_{i-1}$ and $B_{i-1} \leq B_{i} + lim$. State transition probability: $\frac{e^{-\lambda T}(\lambda T)^{(B_i+lim-B_{i-1})}}{(B_i+lim-B_{i-1})!}$.
	\item $B_{i-1}<B_i$. State transition probability: $\frac{e^{-\lambda T}(\lambda T)^{(B_i-B_{i-1})}}{(B_i-B_{i-1})!}$
\end{enumerate}

The state transition probability matrix $P$ is formulated with the above mentioned probabilities. The Markov chain has finite number of states as the buffer capacity of each ONU is fixed at $1Gb$. Moreover, the Markov chain is connected and the state transition probabilities are independent of time, resulting in the time-homogeneous nature of the Markov chain. This results in the ergodic nature of the Markov chain allowing us to solve for the steady state probabilities $\pi(B_i)$, using equations $\pi(B_i)=\pi(B_i)\times P$ and $\sum_{B_i=0}^{\infty} \pi (B_i)=1$. The Markov chain has been solved in MATLAB.

The loss in throughput is the probability of buffer overflow. Therefore, throughput \eqref{eq:thru} is given by the probability that buffer does not overflow ($B_\infty$ corresponds to the scenario that buffer is full, i.e. $1Gb$). The throughput obtained from \eqref{eq:thru} is plotted in Fig. \ref{Fig:throughput125}, as the theoretical upper-bound of a limited granting scheme for $r=125Mbps$ and $N=8,4$. The theoretical bound on throughput has been obtained for each ONU buffer, while the plot obtained from the simulation is for throughput of the entire network Fig. \ref{Fig:throughput125}. Since, ONUs have been assumed to be homogeneous, both plots eventually correspond. The theoretical bound explains the drop in throughput for CEVF. The difference between the theoretical bound and the plot obtained from simulation in Fig. \ref{Fig:throughput125} arises due to the assumption of exponential traffic in theoretical modelling. However, the Pareto distributed on-off traffic exhibits bursty nature, resulting in saturation of throughput at much lower network loads $\rho \sim 0.8$.

\begin{equation}\label{eq:thru}
throughput=\rho \times (1- Pr(B_\infty)) \times
100
\end{equation}

For EFT-VF applied to Fig. \ref{Fig:TWDM architectures}d-e, the throughput initially increases with an increase in network load ($\sim \rho=0.4$) and falls thereafter ($\sim \rho=0.9$). This nature of corresponds to the throughput of the slotted-ALOHA scheme. The throughput finally increases at $\rho~0.9$. This is because at high network loads, CEVF reduces to a TDM scheme, thereby reducing group collisions.

\section{Conclusion}\label{Section:conclusion}

In this paper we have identified a new scheduling domain (group collision) that is present in flexible and secure TWDM schemes. A theoretical upper bound of the throughput for these schemes has also been derived. We have illustrated that the proposed CEVF scheduling scheme achieves a significantly high throughput as it considers both group and receiver collisions. The obtained throughput is closely matches with the theoretical bound. A modified version of the CEVF has also been proposed in this paper, that is an optimal scheme and has linear complexity. This results in low computational requirements.


%




\ifCLASSOPTIONcaptionsoff
  \newpage
\fi

\bibliographystyle{IEEEtran}

\begin{thebibliography}{10}
	\providecommand{\url}[1]{#1}
	\csname url@samestyle\endcsname
	\providecommand{\newblock}{\relax}
	\providecommand{\bibinfo}[2]{#2}
	\providecommand{\BIBentrySTDinterwordspacing}{\spaceskip=0pt\relax}
	\providecommand{\BIBentryALTinterwordstretchfactor}{4}
	\providecommand{\BIBentryALTinterwordspacing}{\spaceskip=\fontdimen2\font plus
		\BIBentryALTinterwordstretchfactor\fontdimen3\font minus
		\fontdimen4\font\relax}
	\providecommand{\BIBforeignlanguage}[2]{{%
			\expandafter\ifx\csname l@#1\endcsname\relax
			\typeout{** WARNING: IEEEtran.bst: No hyphenation pattern has been}%
			\typeout{** loaded for the language `#1'. Using the pattern for}%
			\typeout{** the default language instead.}%
			\else
			\language=\csname l@#1\endcsname
			\fi
			#2}}
	\providecommand{\BIBdecl}{\relax}
	\BIBdecl
	
	\bibitem{Kani2009}
	J.~I. Kani, F.~Bourgart, A.~Cui, A.~Rafel, M.~Campbell, R.~Davey, and
	S.~Rodrigues, ``{Next-generation PON-part I: Technology roadmap and general
		requirements},'' \emph{IEEE Communications Magazine}, vol.~47, no.~11, pp.
	43--49, 2009.
	
	\bibitem{Luo2013}
	Y.~Luo, X.~Zhou, F.~Effenberger, X.~Yan, G.~Peng, Y.~Qian, and Y.~Ma, ``{Time-
		and wavelength-division multiplexed passive optical network (TWDM-PON) for
		next-generation PON stage 2 (NG-PON2)},'' \emph{Journal of Lightwave
		Technology}, vol.~31, no.~4, pp. 587--593, 2013.
	
	\bibitem{Dixit2011}
	A.~Dixit, B.~Lannoo, G.~Das, D.~Colle, M.~Pickavet, and P.~Demeester,
	``{Evaluation of flexibility in hybrid WDM / TDM PONs},'' \emph{1st
		International Symposium on Access Spaces (ISAS)}, vol.~1, pp. 13--18, 2011.
	
	\bibitem{Bock2005}
	C.~Bock, J.~Prat, and S.~D. Walker, ``{Hybrid WDM/TDM PON using the AWG FSR and
		featuring centralized light generation and dynamic bandwidth allocation},''
	\emph{Journal of Lightwave Technology}, vol.~23, no.~12, pp. 3981--3988,
	2005.
	
	\bibitem{Tsalamanis2004}
	I.~Tsalamanis, E.~Rochat, S.~Walker, M.~Parker, and D.~Holburn, ``{Experimental
		demonstration of cascaded AWG access network featuring bi-directional
		transmission and polarization multiplexing.}'' \emph{Optics express},
	vol.~12, no.~5, pp. 764--769, 2004.
	
	\bibitem{Bhar2014}
	C.~Bhar, G.~Das, A.~Dixit, B.~Lannoo, D.~Colle, M.~Pickavet, and P.~Demeester,
	``{A Novel Hybrid WDM/TDM PON Architecture Using Cascaded AWGs and Tunable
		Components},'' \emph{Journal of Lightwave Technology}, vol.~32, no.~9, pp.
	1708--1716, 2014.
	
	\bibitem{Bhar2015}
	C.~Bhar, G.~Das, A.~Dixit, B.~Lannoo, M.~V. {Der Wee}, D.~Colle, D.~Datta,
	M.~Pickavet, and P.~Demeester, ``{A green open access optical distribution
		network with incremental deployment support},'' \emph{Journal of Lightwave
		Technology}, vol.~33, no.~19, pp. 4079--4092, 2015.
	
	\bibitem{Colle2013}
	\BIBentryALTinterwordspacing
	D.~Colle, A.~Dixit, B.~Lannoo, P.~Demeester, and M.~Pickavet, ``{Novel DBA
		algorithm for energy efficiency in TWDM-PONs},'' \emph{39th European
		Conference and Exhibition on Optical Communication (ECOC 2013)}, pp.
	1179--1181, 2013. [Online]. Available:
	\url{http://digital-library.theiet.org/content/conferences/10.1049/cp.2013.1666}
	\BIBentrySTDinterwordspacing
	
	\bibitem{Kanonakis2010}
	K.~Kanonakis and I.~Tomkos, ``{Improving the efficiency of online upstream
		scheduling and wavelength assignment in hybrid WDM/TDMA EPON networks},''
	\emph{IEEE Journal on Selected Areas in Communications}, vol.~28, no.~6, pp.
	838--848, 2010.
	
	\bibitem{Effenberger2013}
	\BIBentryALTinterwordspacing
	F.~Effenberger, ``{Flexible TWDM PON with Load Balancing and Power Saving},''
	\emph{39th European Conference and Exhibition on Optical Communication (ECOC
		2013)}, no.~1, pp. 576--578, 2013. [Online]. Available:
	\url{http://digital-library.theiet.org/content/conferences/10.1049/cp.2013.1465}
	\BIBentrySTDinterwordspacing
	
	\bibitem{Gravalos2014}
	I.~Gravalos, K.~Yiannopoulos, G.~Papadimitriou, and E.~A. Varvarigos, ``{A
		modified max-min fair dynamic bandwidth allocation algorithm for XG-PONs},''
	\emph{2014 19th European Conference on Networks and Optical Communications,
		NOC 2014}, pp. 57--62, 2014.
	
\end{thebibliography}


\end{document}